\documentclass[aps,prd,twocolumn,superscriptaddress,nofootinbib,10pt]{revtex4-2}

\usepackage{amsmath,amssymb,mathrsfs,bm,amsthm}
\usepackage{graphicx}
\usepackage{booktabs}
\usepackage[colorlinks=true,linkcolor=blue,citecolor=blue,urlcolor=blue]{hyperref}
\usepackage{orcidlink}

\newtheorem{theorem}{Theorem}
\newtheorem{proposition}{Proposition}
\newtheorem{corollary}{Corollary}
\newtheorem{lemma}{Lemma}

\newcommand{\dd}{\mathrm{d}}
\newcommand{\ee}{\mathrm{e}}

\newcommand{\CH}{\mathcal{H}_{\rm C}}

\begin{document}

\title{Mass inflation or scalaron rigidity: Cauchy horizons in metric \texorpdfstring{$f(R)$}{f(R)} gravity}

\author{Francisco S. N. Lobo\orcidlink{0000-0002-9388-8373}}
\email{fslobo@ciencias.ulisboa.pt}
\affiliation{Instituto de Astrof\'isica e Ci\^encias do Espa\c{c}o, Faculdade de Ci\^encias da Universidade de Lisboa, Edif\'icio C8, Campo Grande, P-1749-016 Lisbon, Portugal}
\affiliation{Departamento de F\'isica, Faculdade de Ci\^encias da Universidade de Lisboa, Edif\'icio C8, Campo Grande, P-1749-016 Lisbon, Portugal}

\author{Maickol Mu\~noz-Palma\orcidlink{0009-0008-1017-4406}}
\email{m.munoz58@ufromail.cl}
\affiliation{Departamento de Ciencias F\'isicas, Universidad de La Frontera, Casilla 54-D, 4811186 Temuco, Chile}

\author{Francisco Tello-Ortiz\orcidlink{0000-0002-7104-5746}}
\email{francisco.tello@ufrontera.cl}
\affiliation{Departamento de Ciencias F\'isicas, Universidad de La Frontera, Casilla 54-D, 4811186 Temuco, Chile}

\date{\today}

\begin{abstract}
We develop a double-null framework for Cauchy-horizon dynamics in metric
$f(R)$ gravity that separates ordinary mass inflation from scalaron
cancellation and genuine escape from the blueshift instability. Starting from
the exact spherical Hawking-mass transport equation, we show that a regular
nondegenerate Cauchy horizon undergoes mass inflation whenever the effective
longitudinal source is eventually nonnegative, uniformly dominates the mixed
channel, and has a divergent blueshift-weighted integral. Hence bounded mass
requires sufficiently strong blueshift-integrable suppression of this source
or an independent competing channel. We then analyze the weaker branch in
which the scalaron cancels only the leading Price-tail contribution. Under
non-cancelling transverse and trace asymptotics, this forces
$|R|\to\infty$ and an asymptotically linear high-curvature theory,
$f(R)/R\to F_->0$, where $F_-\equiv\lim_{v\to\infty}f_R$. The curvature
coefficient is fixed explicitly by the transverse asymptotics. For regular
model classes with a finite high-curvature $f_{RR}$ limit one obtains
$f_{RR}\to0$. More sharply, every eventually viable branch with $f_{RR}>0$
is forced to $R\to-\infty$ and $\Lambda_\infty>-1$; the standard scalaron
mass parameter then diverges, while an additional differentiable rate
condition yields $m_{\rm sc}^2\sim F/(3f_{RR})\to+\infty$. In the Einstein
frame the scalaron null kinetic contribution is nonnegative, so the
Jordan-frame cancellation cannot be interpreted as negative scalaron null
energy. The result is a local rigidity classification rather than a global
theorem of strong cosmic censorship.
\end{abstract}
\maketitle

\section{Introduction}
\label{sec:introduction}

The inner horizon of a charged or rotating black hole is a natural arena in
which small exterior perturbations can be converted into large interior
curvature. The basic mechanism was identified through the blueshift
instability of Reissner--Nordstr\"om and Kerr Cauchy horizons and was sharpened
by the Poisson--Israel mass-inflation scenario: counter-streaming ingoing and
outgoing radiation produces an exponential growth of an interior mass
function even when the late-time influx seen outside the black hole is
arbitrarily weak~\cite{ChandrasekharHartle1982,PoissonIsrael1989,PoissonIsrael1990,Ori1991}.
Price's law supplies the slowly decaying exterior tail that feeds the
instability in asymptotically flat collapse~\cite{Price1972}. The associated
breakdown of determinism at Cauchy horizons is closely connected with
Penrose's strong-cosmic-censorship perspective~\cite{Penrose1974}.
Rigorous work in spherical symmetry subsequently established the relevant
interior blow-up and clarified its relation to extendibility of the maximal
Cauchy development~\cite{Dafermos2005,LukOh2017I,LukOh2017II}. These results
also make clear that mass inflation and strong cosmic censorship (SCC) are
closely related but not identical statements: the regularity class across the
Cauchy horizon matters, and $C^0$ extendibility can coexist with stronger
curvature or connection blow-up.

The instability is particularly important for regular black holes, whose
nonsingular cores commonly require an inner horizon.  Generic nonextremal
inner horizons inherit the mass-inflation problem
~\cite{CarballoRubio2021}.  A distinct line of work has therefore explored
\emph{inner-degenerate} geometries with vanishing inner-horizon surface
gravity, for which the exponential amplification is absent or replaced by a
weaker law~\cite{CarballoRubio2022,Franzin2022,CarballoRubio2024,Feng2026,LiuSoranidis2026,EichhornFernandes2026,DiFilippoKolarKubiznak2025}.
These models already show that the surface gravity $\kappa_-$ is not a
technical detail: changing the local horizon geometry can remove the very
exponential clock responsible for ordinary mass inflation.

Modified gravity offers another possible route.  Metric $f(R)$ gravity
contains, in addition to the massless spin-2 field, a scalar degree of
freedom $F\equiv f_R$~\cite{Buchdahl1970,SotiriouFaraoni2010,DeFeliceTsujikawa2010,CapozzielloDeLaurentis2011}.
Because second derivatives of $F$ enter the Jordan-frame field equations,
they can have either sign in a null-null projection.  It is therefore
natural to ask whether the scalaron can dynamically screen the matter flux
that drives mass inflation.  Numerical double-null studies already show
that the answer is subtle.  Hwang, Lee and Yeom found mass inflation in
charged $f(R)$ collapse while observing that higher-curvature corrections
can strongly modify particular curvature scalars~\cite{HwangLeeYeom2011}.
Guo and Joshi later found that black-hole interiors can drive $F$ toward its
high-curvature general-relativistic value or toward curvature singularities,
with the behavior depending strongly on the ultraviolet completion of the
chosen $f(R)$ model~\cite{GuoJoshi2015}.  Viable dark-energy models themselves
provide examples in which the high-curvature regime is associated with a
large scalaron mass and $f_{RR}\to0$~\cite{HuSawicki2007,Starobinsky2007,Frolov2008}.

The purpose of this paper is not to assert that every $f(R)$ theory has the
same inner-horizon singularity.  Instead, we ask the more precise question:
\emph{what must happen locally if the scalaron is to screen the
	nondegenerate Cauchy-horizon mass-inflation source?}  The answer leads to a
useful classification.

Our analysis yields three complementary results. First, the exact
Hawking-mass transport law shows that, at a regular nondegenerate Cauchy
horizon in the longitudinally dominated positive-source sector, bounded mass
requires the effective longitudinal source to be sufficiently suppressed to
overcome the exponential blueshift, unless another term in the exact transport
law competes at the same order. Second, if the leading Price-tail contribution
is cancelled by the scalaron, the resulting branch is strongly constrained:
under the stated non-cancelling transverse and trace asymptotics it is driven
toward an asymptotically linear high-curvature sector. For regular high-curvature model classes in which $f_{RR}$ admits a finite asymptotic limit, this implies $f_{RR}\to0$. Third, the cancellation itself selects the
curvature direction on viable branches: because $F_{,v}<0$ asymptotically,
an eventually stable $f_{RR}>0$ branch must satisfy $R\to-\infty$ and
$\Lambda_\infty>-1$. The standard scalaron mass parameter then diverges; with
the additional rate control specified below, the usual
$m_{\rm sc}^2\sim F/(3f_{RR})$ heavy-scalaron limit follows. The Einstein-frame
representation further shows that the Jordan-frame cancellation is not a
negative scalaron null-energy contribution. These statements provide a local
rigidity classification of possible screening mechanisms, not a universal
no-go theorem for inner horizons or strong cosmic censorship.

The paper is organized as follows.  Section~\ref{sec:double-null} develops
the full double-null equations and derives the exact mass-transport law.
Section~\ref{sec:criterion} establishes the local mass-inflation criterion.
Section~\ref{sec:screening} analyzes the leading-cancellation branch and proves the
scalaron-rigidity theorem.  Section~\ref{sec:einstein} gives the
Einstein-frame interpretation.  Section~\ref{sec:escape} classifies the
independent escape channels and connects them with degenerate-inner-horizon
models and exponentially decaying tails.  Section~\ref{sec:models} discusses
consequences for representative $f(R)$ model classes and existing numerical
work.  Section~\ref{sec:scc} clarifies the relation to SCC and the scope of
the result.  Technical derivations are collected in the Appendices.

\section{Metric \texorpdfstring{$f(R)$}{f(R)} gravity in full double-null form}
\label{sec:double-null}

\subsection{Field equations and notation}

We work in four spacetime dimensions and use units $G=c=1$.  The metric
$f(R)$ action is
\begin{equation}
 S=\frac{1}{16\pi}\int \dd^4x\,\sqrt{-g}\,f(R)+S_m[g,\Psi].
 \label{eq:action}
\end{equation}
Defining $ F\equiv f_R=\dd f/\dd R$,
the metric field equations are
\begin{equation}
 F R_{\mu\nu}-\frac12 f g_{\mu\nu}
 -\nabla_\mu\nabla_\nu F+g_{\mu\nu}\Box F
 =8\pi T_{\mu\nu},
 \label{eq:field}
\end{equation}
with trace
\begin{equation}
 3\Box F+FR-2f=8\pi T.
 \label{eq:trace}
\end{equation}

It is useful to write Eq.~\eqref{eq:field} as
\begin{equation}
 F G_{\mu\nu}=\mathcal P_{\mu\nu},
 \label{eq:Einstein-form}
\end{equation}
where
\begin{equation}
 \mathcal P_{\mu\nu}
 =8\pi T_{\mu\nu}+\nabla_\mu\nabla_\nu F
 -g_{\mu\nu}\Box F+\frac12(f-FR)g_{\mu\nu}.
 \label{eq:Pmunu}
\end{equation}
No ``effective-fluid'' interpretation is required below; $\mathcal P_{\mu\nu}$
is only a convenient shorthand for the exact geometric source of
$G_{\mu\nu}$.

In spherical symmetry we choose double-null coordinates
\begin{equation}
 \dd s^2=-\Omega^2(u,v)\,\dd u\,\dd v+r^2(u,v)\,\dd\Omega_2^2,
 \label{eq:metric}
\end{equation}
so $g_{uv}=g_{vu}=-\Omega^2/2$ and $g^{uv}=g^{vu}=-2/\Omega^2$.

We also write $\Omega=\ee^\sigma$.  The null-null source relevant to the
$v$ direction is particularly simple because $g_{vv}=0$:
\begin{equation}
 \mathcal P_{vv}
 =8\pi T_{vv}+F_{,vv}-2\sigma_{,v}F_{,v}.
 \label{eq:Pvv}
\end{equation}
This identity is exact even for a fully two-dimensional scalaron
$F=F(u,v)$.

The scalar d'Alembertian is
\begin{equation}
 \Box F=-\frac{4}{\Omega^2}
 \left[
 F_{,uv}+\frac{r_{,u}}{r}F_{,v}
 +\frac{r_{,v}}{r}F_{,u}
 \right].
 \label{eq:boxF}
\end{equation}
The term proportional to $r_{,u}F_{,v}$ is the area-factor contribution that
is easily lost if one treats the $(u,v)$ sector as if it were a flat
two-dimensional wave operator.  Equation~\eqref{eq:boxF} is also explicit in
the established double-null formulation of $f(R)$ collapse
~\cite{HwangLeeYeom2011}.

\subsection{Hawking mass and exact transport}

We use the geometric Hawking mass (equivalently, the Misner--Sharp mass aspect in spherical symmetry), defined by
\begin{equation}
 1-\frac{2m}{r}
 =g^{ab}\partial_a r\partial_b r
 =-\frac{4}{\Omega^2}r_{,u}r_{,v},
 \label{eq:MSmass}
\end{equation}
where $a,b\in\{u,v\}$.  This is the standard quasi-local energy in
spherical symmetry~\cite{Hayward1996}.  In electrovacuum it differs from the
usual Reissner--Nordstr\"om mass parameter by the finite Coulomb term
$Q^2/(2r)$; at a Cauchy horizon with $r\to r_->0$ finite, divergence of one
mass aspect is therefore equivalent to divergence of the other.
Differentiating Eq.~\eqref{eq:MSmass} and using the $vv$ and $uv$ components
of Eq.~\eqref{eq:Einstein-form} gives the exact transport identity
\begin{equation}
 \partial_v m
 =\frac{r^2}{F\Omega^2}
 \left(
 \mathcal P_{uv}\,r_{,v}
 -\mathcal P_{vv}\,r_{,u}
 \right).
 \label{eq:mass-transport-v}
\end{equation}
Similarly,
\begin{equation}
 \partial_u m
 =\frac{r^2}{F\Omega^2}
 \left(
 \mathcal P_{uv}\,r_{,u}
 -\mathcal P_{uu}\,r_{,v}
 \right).
 \label{eq:mass-transport-u}
\end{equation}
A direct derivation is given in Appendix~\ref{app:mass}.  Equations
\eqref{eq:mass-transport-v}--\eqref{eq:mass-transport-u} replace the
heuristic ``generalized Poisson--Israel relation'' by an exact identity.
They also make clear that a longitudinal cancellation is only one possible
way of altering the mass-inflation balance: the mixed channel can in
principle compete if it ceases to be subleading.

\section{A local criterion for mass inflation}
\label{sec:criterion}

We now work along a fixed outgoing generator $u=u_0$ approaching a future
Cauchy horizon $\CH$ as $v\to\infty$. To keep the coordinate normalization
separate from the invariant statement of nondegeneracy, we define the
advanced-time blueshift exponent $\beta_->0$ by
\begin{align}
 \Omega^2(v)&=B\,\ee^{-\beta_- v}[1+o(1)],
 & B&>0,
 \label{eq:CH-Om}\\
 \sigma_{,v}(v)&\to-\frac{\beta_-}{2},
 \label{eq:CH-sigma}\\
 r(v)&\to r_->0,
 & r_{,u}(v)&\to-A_-<0,
 \label{eq:CH-r}\\
 F(v)&\to F_->0,
 & r_{,v}(v)&\to0.
 \label{eq:CH-F}
\end{align}
The derivative condition in Eq.~\eqref{eq:CH-sigma} is stated explicitly;
it does not follow from an $o(1)$ remainder in Eq.~\eqref{eq:CH-Om} without
additional regularity. Once the normalization of the advanced coordinate is
fixed, the positive exponent $\beta_-$ is determined by the nonzero
Cauchy-horizon surface gravity; for the standard asymptotically normalized
Eddington advanced time one has $\beta_-=\kappa_-$. A constant rescaling of
$v$ rescales $\beta_-$ accordingly, while the invariant content is the
presence of a positive exponential blueshift exponent. A degenerate horizon
with $\kappa_-=0$ lies outside the exponential asymptotic class
\eqref{eq:CH-Om} and must instead be described by its appropriate
subexponential, often power-law, near-horizon asymptotics. The assumptions
above therefore describe a regular nondegenerate Jordan-frame Cauchy horizon
on a branch for which the effective Newton coupling remains finite and
positive.

The condition $r_{,u}\to-A_-<0$ is logically independent of the positivity of
$\beta_-$. It encodes a nonvanishing transverse areal-radius gradient, as
appropriate to the counter-streaming regime underlying the standard
mass-inflation mechanism; it is not implied by horizon nondegeneracy alone.

The mixed contribution will be called \emph{uniformly subleading} if there
exists a nonnegative function $\varepsilon(v)\to0$ such that, for all
sufficiently large $v$,
\begin{equation}
 \left|\mathcal P_{uv}r_{,v}\right|
 \leq \varepsilon(v)\,
 \left|\mathcal P_{vv}\right|(-r_{,u}).
 \label{eq:mixed-subleading}
\end{equation}

This uniform formulation also controls isolated zeros of $\mathcal P_{vv}$:
whenever the longitudinal term vanishes, the mixed term must vanish at the
same asymptotic order. It is the local version of the usual statement that the
blueshifted longitudinal flux controls the leading mass-inflation channel.

{It is useful to distinguish the pure general-relativistic
	limit from the genuinely two-dimensional scalaron case.  When $F\equiv1$,
	all scalaron-derivative contributions to $\mathcal P_{uv}$ and
	$\mathcal P_{vv}$ vanish, so that
	$\mathcal P_{\mu\nu}=8\pi T_{\mu\nu}$.  This simplification does not,
	by itself, make the longitudinal-dominance condition
	\eqref{eq:mixed-subleading} automatic for arbitrary matter: one still
	requires the matter-sector hierarchy
	\[
	|T_{uv}r_{,v}|
	=o\!\left(|T_{vv}|(-r_{,u})\right).
	\]
	For the standard matter models underlying the usual spherical
	mass-inflation scenario this hierarchy is often satisfied by the
	matter equations themselves---for example, the radial massless-scalar
	sector has $T_{uv}=0$, while the Maxwell mixed component is suppressed
	by the near-horizon geometry.  For a genuine two-dimensional scalaron,
	however, $\mathcal P_{uv}$ additionally contains transverse derivatives,
	including $F_{,uv}$, whose behavior is not fixed by the longitudinal
	cancellation equation~\eqref{eq:screening-ode}.  Thus
	Eq.~\eqref{eq:mixed-subleading} should be regarded in the general
	$f(R)$ problem as an independent asymptotic hypothesis on the mixed
	channel, logically separate from the longitudinal cancellation and
	analogous in status to the non-cancelling transverse assumption
	$\Lambda(v)\to\Lambda_\infty\neq-1$ introduced in
	Sec.~\ref{4B}.}

\begin{theorem}[Local mass-inflation criterion]
\label{thm:massinflation}
Assume Eqs.~\eqref{eq:CH-Om}--\eqref{eq:CH-F} and the uniform
longitudinal-dominance condition~\eqref{eq:mixed-subleading}. If
$\mathcal P_{vv}\geq0$ eventually and
\begin{equation}
 \int^{\infty}\ee^{\beta_-v}\,\mathcal P_{vv}(v)\,\dd v
 =\infty,
 \label{eq:weighted-divergence}
\end{equation}
then the Hawking mass diverges along the generator,
\begin{equation}
 m(v)\to+\infty.
 \label{eq:m-diverge}
\end{equation}
\end{theorem}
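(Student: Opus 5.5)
We integrate the exact transport identity \eqref{eq:mass-transport-v} along the generator $u=u_0$ and show that the integrand is eventually bounded below by a positive multiple of $\ee^{\beta_- v}\mathcal P_{vv}$.

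\emph{Splitting the integrand.} Write
\[
\partial_v m=\frac{r^2}{F\Omega^2}\,(-r_{,u})\,\mathcal P_{vv}+\frac{r^2}{F\Omega^2}\,\mathcal P_{uv}\,r_{,v}.
\]
By Eqs.~\eqref{eq:CH-r}--\eqref{eq:CH-F}, the prefactor obeys $r^2(-r_{,u})/F\to r_-^2A_-/F_->0$. Hence for $v\ge v_1$ it lies between $c_1=\tfrac12 r_-^2A_-/F_-$ and $c_2=2r_-^2A_-/F_-$. By Eq.~\eqref{eq:CH-Om}, for large $v$ we also have $\tfrac12 B^{-1}\ee^{\beta_- v}\le \Omega^{-2}\le 2B^{-1}\ee^{\beta_- v}$. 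The condition $F\to F_->0$ is what keeps $1/F$ harmless, and it is the only place where it is needed.

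\emph{Absorbing the mixed term.} By the uniform dominance condition \eqref{eq:mixed-subleading},
\[
\left|\frac{r^2}{F\Omega^2}\mathcal P_{uv}r_{,v}\right|\le \varepsilon(v)\,\frac{r^2}{F\Omega^2}\,|\mathcal P_{vv}|(-r_{,u}).
\]
Since $\mathcal P_{vv}\ge0$ eventually, $|\mathcal P_{vv}|=\mathcal P_{vv}$ there. Choose $v_0$ with $\varepsilon(v)\le\tfrac12$ and all the above bounds holding for $v\ge v_0$. Then
\[
\partial_v m\ \ge\ \tfrac12\,\frac{r^2(-r_{,u})}{F\Omega^2}\,\mathcal P_{vv}\ \ge\ \frac{c_1}{4B}\,\ee^{\beta_- v}\,\mathcal P_{vv}(v)\ \ge 0 .
\]
Zeros of $\mathcal P_{vv}$ cause no trouble: the uniform formulation forces the mixed term to vanish there as well, so the inequality holds pointwise with no division by $\mathcal P_{vv}$.

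\emph{Concluding.} Integrating from $v_0$ to $V$ gives
\[
m(V)\ge m(v_0)+\frac{c_1}{4B}\int_{v_0}^{V}\ee^{\beta_- v}\mathcal P_{vv}\,\dd v .
\]
The right-hand side tends to $+\infty$ by \eqref{eq:weighted-divergence}. Since $m$ is eventually monotone nondecreasing, $m(v)\to+\infty$. We need $m(v_0)$ to be finite, which holds because the geometry is regular at finite $v$.

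The main obstacle is bookkeeping rather than analysis: the exact lower bound must use the $[1+o(1)]$ asymptotics of $\Omega^2$, $r$, $r_{,u}$ and $F$ simultaneously on a common tail $v\ge v_0$, and must handle the sign of $\mathcal P_{vv}$ and its possible zeros without invoking any differentiability beyond what is assumed. Note that Eq.~\eqref{eq:CH-sigma} and $r_{,v}\to0$ are not actually needed beyond guaranteeing the setting. The dominance hypothesis already encodes the smallness of the mixed channel.
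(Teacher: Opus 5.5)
Your proof is correct and follows the same route as the paper: it combines the exact transport identity \eqref{eq:mass-transport-v}, uniform longitudinal dominance to absorb the mixed term, and the limits of $r$, $r_{,u}$, $F$, $\Omega^2$, then integrates against \eqref{eq:weighted-divergence}. The paper compresses this into $\partial_v m=\frac{r_-^2A_-}{F_-B}\ee^{\beta_-v}\mathcal P_{vv}[1+o(1)]$. Your explicit two-sided constants are a slightly more careful rendering of that same step, since they handle zeros of $\mathcal P_{vv}$ cleanly, and you are right that \eqref{eq:CH-sigma} and $r_{,v}\to0$ are not actually used.
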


\begin{proof}
Under Eqs.~\eqref{eq:CH-Om}--\eqref{eq:mixed-subleading}, the exact transport
law~\eqref{eq:mass-transport-v} gives
\begin{equation}
 \partial_v m
 =\frac{r_-^2A_-}{F_-B}
 \ee^{\beta_-v}\mathcal P_{vv}(v)[1+o(1)].
 \label{eq:mv-asymptotic}
\end{equation}
The prefactor is positive, and Eq.~\eqref{eq:weighted-divergence} therefore
forces the integral of $\partial_vm$ to diverge.
\end{proof}

Theorem~\ref{thm:massinflation} is intentionally local. It does not assume a
specific matter model and does not identify mass inflation with a global SCC
theorem. It states exactly what nondegeneracy does: the longitudinal source
is weighted by $\Omega^{-2}\sim\ee^{\beta_-v}$.

Within the eventually nonnegative sector of Theorem~\ref{thm:massinflation},
a bounded-mass branch that retains the same nondegenerate geometry, finite
positive $F$, and longitudinal dominance must satisfy
\begin{equation}
 \int^\infty \Omega^{-2}\mathcal P_{vv}\,\dd v<\infty.
 \label{eq:screening-necessary}
\end{equation}
For a sign-robust notion that does not rely on cancellations between positive
and negative residuals, we shall call
\begin{equation}
 \int^\infty \Omega^{-2}\left|\mathcal P_{vv}\right|\,\dd v<\infty
 \label{eq:screening-absolute}
\end{equation}
\emph{blueshift-integrable longitudinal screening}. For a polynomial Price
tail this is much stronger than a mere reduction of amplitude: the residual
source must be suppressed sufficiently rapidly to beat the exponential
blueshift.

\section{\texorpdfstring{Leading longitudinal cancellation and scalaron rigidity}{Leading longitudinal cancellation and scalaron rigidity}}
\label{sec:screening}

\subsection{\texorpdfstring{Price-type tails and leading longitudinal cancellation}{Price-type tails and leading longitudinal cancellation}}

To make the asymptotics precise, let
\begin{equation}
 \tau(v)\equiv T_{vv}(u_0,v)>0
 \label{eq:tau}
\end{equation}
for sufficiently large $v$, and assume
\begin{align}
 \tau(v)&\to0,
 & \int_{v_0}^\infty\tau(v)\,\dd v&<\infty,
 \label{eq:tail-L1}\\
 \frac{\tau'(v)}{\tau(v)}&\to0,
 & \ee^{\beta_-v}\tau(v)&\to\infty.
 \label{eq:tail-slow}
\end{align}
These conditions include the usual integrable power-law fluxes
$\tau\sim v^{-p}$ with $p>1$, and more generally tails with slowly varying
multiplicative corrections. The final condition states that the tail is too
slow to beat the nondegenerate blueshift. We deliberately do not claim that
this class contains exponentially decaying de Sitter tails; those form a
separate escape channel in Sec.~\ref{sec:escape}.

We call a branch \emph{leading-order longitudinally cancelling} when
\begin{equation}
 \mathcal P_{vv}=o(\tau).
 \label{eq:screened}
\end{equation}
This condition removes the Price-tail matter term at leading order, but it is
not by itself sufficient for bounded mass: a residual that is merely a
smaller power law still fails the blueshift-integrability condition
\eqref{eq:screening-absolute}. Thus Eqs.~\eqref{eq:screened} and
\eqref{eq:screening-absolute} encode distinct notions. The former isolates
the asymptotic cancellation sector analyzed below; the latter is the strong
screening condition relevant to robust bounded-mass evolution. Using
Eq.~\eqref{eq:Pvv}, the leading-cancellation equation is
\begin{equation}
 F_{,vv}-2\sigma_{,v}F_{,v}
 =-8\pi\tau+o(\tau).
 \label{eq:screening-ode}
\end{equation}
No assumption has been made about $F_{,u}$ or $F_{,uv}$.

\begin{lemma}[Universal longitudinal rate]
\label{lem:longitudinal}
Suppose Eqs.~\eqref{eq:CH-Om} and~\eqref{eq:CH-sigma} hold and the tail obeys
Eqs.~\eqref{eq:tail-L1}--\eqref{eq:tail-slow}. Then every solution of
Eq.~\eqref{eq:screening-ode} satisfies
\begin{equation}
 F_{,v}(v)
 =-\frac{8\pi}{\beta_-}\,\tau(v)[1+o(1)].
 \label{eq:Fv-universal}
\end{equation}
Consequently $F(v)$ has a finite limit. On the regular branch considered
here we assume this limit is $F_->0$.
\end{lemma}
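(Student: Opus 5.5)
Write $G\equiv F_{,v}$ and treat Eq.~\eqref{eq:screening-ode} as a first-order linear ODE for $G$ along $u=u_0$:
\[
G'-2\sigma_{,v}G=-8\pi\tau+\rho,\qquad \rho=o(\tau).
\]
The integrating factor is $\Omega^{-2}=\ee^{-2\sigma}$. Integrating from a fixed $v_1$ gives the exact variation-of-constants formula
\[
G(v)=\Omega^2(v)\Big[\Omega^{-2}(v_1)G(v_1)+\int_{v_1}^{v}\Omega^{-2}(s)\big(-8\pi\tau(s)+\rho(s)\big)\dd s\Big].
\]
The plan is to show that the integral dominates the homogeneous term and behaves like $-8\pi\,\Omega^{-2}(v)\tau(v)/\beta_-$, up to a factor $[1+o(1)]$.

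The key step is an l'H\^opital (Stolz-type) argument for the weighted integral $I(v)=\int_{v_1}^v\Omega^{-2}\tau\,\dd s$.

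\emph{Divergence.} By Eq.~\eqref{eq:CH-sigma}, $-2\sigma_{,v}\to\beta_-$, so $\Omega^{-2}$ grows like $\ee^{(\beta_-+o(1))v}$. Combined with $\tau'/\tau\to0$, the logarithmic derivative of $\Omega^{-2}\tau$ tends to $\beta_->0$. Hence $\Omega^{-2}\tau\to\infty$ and $I(v)\to\infty$. Note that it is this log-derivative, not Eq.~\eqref{eq:CH-Om} alone, that makes $\Omega^{-2}\tau$ blow up; the final condition in Eq.~\eqref{eq:tail-slow} is the convention-free version of the same statement.

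\emph{Asymptotics.} Apply l'H\^opital to $I(v)\big/\big(\Omega^{-2}(v)\tau(v)\big)$. Both numerator and denominator diverge, and the ratio of derivatives is
\[
\frac{\Omega^{-2}\tau}{\Omega^{-2}\tau\,\big(-2\sigma_{,v}+\tau'/\tau\big)}\to\frac{1}{\beta_-}.
\]
Therefore $I(v)=\Omega^{-2}(v)\tau(v)\,[1+o(1)]/\beta_-$.

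\emph{Remainder.} Since $|\rho|\le\epsilon(s)\tau$ with $\epsilon\to0$, the same computation bounds the $\rho$-integral by $o(1)$ times $\Omega^{-2}\tau/\beta_-$. To do this, split the integral at a large $v_2$: the tail piece is small uniformly, and the finite piece is a constant, which is $o(\Omega^{-2}\tau)$ because $\Omega^{-2}\tau\to\infty$.

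\emph{Homogeneous term.} The constant $\Omega^{-2}(v_1)G(v_1)$ is likewise $o(\Omega^{-2}\tau)$, so it drops out.

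Multiplying back by $\Omega^2(v)$ yields Eq.~\eqref{eq:Fv-universal} for every solution, independent of initial data.

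For the consequence, Eq.~\eqref{eq:Fv-universal} gives $|F_{,v}|\le C\tau$ for large $v$. By Eq.~\eqref{eq:tail-L1}, $\tau\in L^1$, so $F_{,v}$ is integrable and $F(v)=F(v_0)+\int_{v_0}^vF_{,v}\,\dd s$ converges. The value and sign of the limit are not determined by the ODE; positivity $F_->0$ is imposed as the regular-branch assumption, as stated.

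The main obstacle is the l'H\^opital step. The only available regularity is the limit of $\sigma_{,v}$ in Eq.~\eqref{eq:CH-sigma} together with $\tau'/\tau\to0$; Eq.~\eqref{eq:CH-Om} by itself would not suffice, as the paper notes. A fallback is to work directly with exponential bounds of the form $\ee^{(\beta_-\mp\delta)(v-s)}$ for $s$ close to $v$ and bound the integrand piecewise, so that differentiability is never needed beyond those two limits. I expect the remainder-uniformity bookkeeping in the split integral to be the fiddly part.
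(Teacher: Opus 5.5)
Your proposal is correct and follows the paper's proof step for step. You use the integrating factor $\ee^{-2\sigma}=\Omega^{-2}$, apply l'H\^opital to $\int^v\Omega^{-2}\tau$ using $-2\sigma_{,v}+\tau'/\tau\to\beta_-$, note that the homogeneous term is suppressed, and get a finite limit from $F_{,v}\in L^1$ via $\tau\in L^1$. Your explicit handling of the $o(\tau)$ remainder, and of the homogeneous constant relative to $\Omega^{-2}\tau\to\infty$, just spells out what the paper absorbs into its $[1+o(1)]$ factor; the piecewise-exponential fallback you mention is not needed.
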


\begin{proof}
Multiplying Eq.~\eqref{eq:screening-ode} by $\ee^{-2\sigma}$ gives
\begin{equation}
 \partial_v\!\left(\ee^{-2\sigma}F_{,v}\right)
 =-8\pi\ee^{-2\sigma}\tau[1+o(1)].
 \label{eq:integrating-factor}
\end{equation}
The homogeneous term is proportional to $\ee^{2\sigma}$ and is
exponentially suppressed. Moreover,
\begin{equation}
 \int^{v}\ee^{-2\sigma}\tau\,\dd\bar v
 \sim\frac{\ee^{-2\sigma(v)}\tau(v)}{\beta_-},
 \label{eq:asymptotic-integral}
\end{equation}
which follows directly from l'H\^opital's rule because
$-2\sigma_{,v}+\tau'/\tau\to\beta_-$. Substitution into
Eq.~\eqref{eq:integrating-factor} yields Eq.~\eqref{eq:Fv-universal}.
Finally, Eq.~\eqref{eq:tail-L1} implies $F_{,v}\in L^1$ and hence a finite
limit for $F$.
\end{proof}

Lemma~\ref{lem:longitudinal} already has a useful interpretation.  Leading-order cancellation
is not achieved by keeping the scalaron dynamically large. Along the
leading-cancellation branch its longitudinal derivative is slaved to the decaying matter
flux and tends to zero.  The remaining question is whether this longitudinal
behavior is compatible with the transverse scalaron equation encoded in the
trace.

\subsection{The transverse channel in \texorpdfstring{$\Box F$}{Box F}}\label{4B}

Introduce the dimensionless transverse ratio
\begin{equation}
 \Lambda(v)
 \equiv
 \frac{F_{,uv}+(r_{,v}/r)F_{,u}}
 {(r_{,u}/r)F_{,v}},
 \label{eq:Lambda-def}
\end{equation}
whenever the denominator is nonzero.  Equation~\eqref{eq:boxF} becomes
\begin{equation}
 \Box F
 =-\frac{4}{\Omega^2}\frac{r_{,u}}{r}F_{,v}
 \left[1+\Lambda(v)\right].
 \label{eq:box-Lambda}
\end{equation}
The exceptional value $\Lambda\to-1$ corresponds to a distinct
\emph{transverse cancellation}: $F_{,uv}+(r_{,v}/r)F_{,u}$ cancels the
area-factor term.  Nothing in the longitudinal leading-cancellation equation enforces
this cancellation.

We shall say that the transverse scalaron dynamics is \emph{asymptotically non-cancelling} when
\begin{equation}
 \Lambda(v)\to\Lambda_\infty,
 \qquad
 1+\Lambda_\infty\neq0.
 \label{eq:transverse-generic}
\end{equation}

Then Lemma~\ref{lem:longitudinal} and
Eqs.~\eqref{eq:CH-Om}--\eqref{eq:CH-r} imply
\begin{equation}
 \Box F
 \sim
 -\frac{32\pi A_-}{\beta_-Br_-}
 (1+\Lambda_\infty)
 \ee^{\beta_-v}\tau(v).
 \label{eq:box-asymptotic}
\end{equation}

Thus the scalaron contribution removed from the longitudinal source reappears in the trace equation with the same blueshift factor under the non-cancelling transverse asymptotics~\eqref{eq:transverse-generic}.

\subsection{Trace closure and the rigidity theorem}

Let $T\equiv T^\mu{}_{\mu}$ be the matter trace.  We assume that it does not
cancel the blueshifted scalaron term at leading order,
\begin{equation}
 T(v)=o\!\left(\ee^{\beta_-v}\tau(v)\right).
 \label{eq:trace-subleading}
\end{equation}

This includes trace-free matter such as Maxwell fields and null dust and is
sufficient, though not necessary, for the theorem below.  A leading-order
trace cancellation will be listed separately as an escape channel.

\begin{theorem}[Asymptotic scalaron rigidity under leading longitudinal cancellation]
\label{thm:rigidity}
Assume Eqs.~\eqref{eq:CH-Om}--\eqref{eq:CH-F},
\eqref{eq:tail-L1}--\eqref{eq:tail-slow}, leading-order longitudinal
cancellation~\eqref{eq:screened}, non-cancelling transverse asymptotics
\eqref{eq:transverse-generic}, and trace regularity
\eqref{eq:trace-subleading}. Assume also that $f(R)$ is $C^2$ on an open
interval containing the relevant curvature branch and, if the sampled
curvature range remains bounded, that the closure of that range is contained
in this regular interval. Assume further that $F=f_R\to F_->0$, and that
$R(v)$ is eventually monotone if it becomes unbounded. Then
\begin{equation}
 |R(v)|\to\infty,
 \qquad
 R(v)=C_R\ee^{\beta_-v}\tau(v)[1+o(1)],
 \label{eq:R-asymptotic}
\end{equation}
where the nonzero coefficient is fixed by
\begin{equation}
 C_R=-\frac{96\pi A_-}{\beta_-Br_-F_-}
 \left(1+\Lambda_\infty\right) ,
 \label{eq:CR-explicit}
\end{equation}
and
\begin{equation}
 \frac{f(R(v))}{R(v)}\to F_- .
 \label{eq:R-and-linear}
\end{equation}
Thus the leading-cancellation high-curvature branch is asymptotically linear,
\begin{equation}
 f(R)=F_-R+o(R),
 \label{eq:asymptotic-linear}
\end{equation}
along the Cauchy-horizon trajectory.
\end{theorem}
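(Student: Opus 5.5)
The plan is to combine the trace equation \eqref{eq:trace} with the blueshifted asymptotics \eqref{eq:box-asymptotic} of $\Box F$. The latter follows from Lemma~\ref{lem:longitudinal} and the non-cancelling transverse assumption \eqref{eq:transverse-generic}. Since $\ee^{\beta_-v}\tau(v)\to\infty$ by \eqref{eq:tail-slow} and $1+\Lambda_\infty\neq0$, the term $3\Box F$ diverges like $\ee^{\beta_-v}\tau$. The matter trace is $o(\ee^{\beta_-v}\tau)$ by \eqref{eq:trace-subleading}. Rearranging the trace as
\[
FR-2f = 8\pi T-3\Box F
= \frac{96\pi A_-}{\beta_-Br_-}(1+\Lambda_\infty)\,\ee^{\beta_-v}\tau\,[1+o(1)],
\]
we conclude that the algebraic combination $g(R)\equiv f_R(R)R-2f(R)$ must diverge along the trajectory at exactly this rate.

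Next I will show that $|R|\to\infty$, by contradiction. If the sampled curvature range stays bounded, its closure lies in the interval where $f$ is $C^2$. Then $f$ and $f_R$ are bounded on that compact set, so $g(R(v))$ is bounded. This contradicts the divergence just established, so $R$ is unbounded. By the assumed eventual monotonicity, $R\to+\infty$ or $R\to-\infty$.

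To obtain $f/R\to F_-$, note that $F=f_R(R(v))\to F_-$ and $R$ is eventually monotone and divergent, so $f_R(R)\to F_-$ as $R\to\pm\infty$ along that end of the interval. L'H\^opital's rule, or a direct integration $f(R)=f(R_0)+\int_{R_0}^R f_R$, gives $f(R)/R\to F_-$, i.e.\ \eqref{eq:R-and-linear} and \eqref{eq:asymptotic-linear}. Then
\[
g(R)=R\bigl(f_R-2f/R\bigr)=R\,(F_- - 2F_- + o(1)) = -F_-R\,[1+o(1)].
\]
Equating this with the trace asymptotics gives
\[
R=-\frac{96\pi A_-}{\beta_-Br_-F_-}(1+\Lambda_\infty)\,\ee^{\beta_-v}\tau\,[1+o(1)],
\]
which is \eqref{eq:R-asymptotic} with $C_R$ as in \eqref{eq:CR-explicit}; $C_R$ is nonzero because $1+\Lambda_\infty\neq0$ and $F_->0$.

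The main obstacle is the step from $F(v)\to F_-$ to the limit of $f_R$ as a function of $R$. This needs monotonicity, so that the trajectory sweeps an entire tail $[R_0,\infty)$ or $(-\infty,R_0]$; without it, the asymptotics of $f$ along the trajectory need not be determined by $f_R$ alone. I also need care with the cancellation $F_- - 2F_-$, which is nonzero precisely because $F_->0$. This is what makes the leading coefficient of $g$ equal to $-F_-R$ rather than degenerate, and so fixes $C_R$ without any further information about $f$.
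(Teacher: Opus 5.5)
Your proposal is correct and follows essentially the same route as the paper's proof. The steps match: the blueshifted divergence of $\Box F$ from \eqref{eq:box-asymptotic}, together with trace regularity, forces $FR-2f$ to diverge, which rules out bounded $R$; eventual monotonicity upgrades this to $|R|\to\infty$; the l'H\^opital/Ces\`aro limit gives $f/R\to F_-$ and hence $FR-2f=-F_-R[1+o(1)]$; and the trace equation then fixes $C_R$. One minor refinement: once $|R|\to\infty$ is known, continuity of $R(v)$ alone already makes the trajectory sweep a full tail, so monotonicity is really used only to pass from unboundedness to $|R|\to\infty$.
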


{The theorem should therefore be read as a conditional
	rigidity statement rather than an existence result: it does not assert
	that the leading-cancellation branch is dynamically realized, but shows
	that any regular branch realizing such a cancellation under the stated
	asymptotics is strongly constrained in its high-curvature endpoint.}

\begin{proof}
By Eq.~\eqref{eq:box-asymptotic} and
$\ee^{\beta_-v}\tau(v)\to\infty$, one has $|\Box F|\to\infty$. If $R$
remained bounded, the closure of its sampled range would lie in the regular
$C^2$ interval specified in the theorem, so $FR-2f$ would remain bounded,
contradicting the trace equation~\eqref{eq:trace} together with
Eq.~\eqref{eq:trace-subleading}. Hence $R$ is unbounded. The assumed eventual
monotonicity then gives $|R|\to\infty$.

Because $F(R(v))\to F_-$ along an eventually monotone unbounded curvature
branch, the elementary Ces\`aro/l'H\^opital limit gives
$f(R(v))/R(v)\to F_-$. Therefore
\begin{equation}
 FR-2f=-F_-R[1+o(1)].
 \label{eq:algebraic-trace}
\end{equation}
Substituting Eq.~\eqref{eq:box-asymptotic} and
Eq.~\eqref{eq:algebraic-trace} into the trace equation, while using
Eq.~\eqref{eq:trace-subleading}, yields
\begin{equation}
 F_-R
 =-\frac{96\pi A_-}{\beta_-Br_-}
 (1+\Lambda_\infty)\ee^{\beta_-v}\tau(v)[1+o(1)],
\end{equation}
which proves Eqs.~\eqref{eq:R-asymptotic} and~\eqref{eq:CR-explicit}.
Equation~\eqref{eq:R-and-linear} is equivalent to
Eq.~\eqref{eq:asymptotic-linear} along the branch.
\end{proof}

\begin{corollary}[Curvature-sign selection on an eventually viable branch]
\label{cor:viable-sign}
Under the hypotheses of Theorem~\ref{thm:rigidity}, suppose in addition that
$f_{RR}(R(v))>0$ for all sufficiently large $v$. Then
\begin{equation}
 R_{,v}<0,
 \qquad
 R(v)\to-\infty,
 \qquad
 \Lambda_\infty>-1.
 \label{eq:viable-sign}
\end{equation}
Conversely, if $\Lambda_\infty<-1$, then $R(v)\to+\infty$ and the
leading-cancellation branch necessarily has $f_{RR}<0$ eventually wherever
$R_{,v}\neq0$.
\end{corollary}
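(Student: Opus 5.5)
The plan is to read off the sign of $R_{,v}$ from the chain rule $F_{,v}=f_{RR}(R)\,R_{,v}$ and then to combine this with the explicit curvature coefficient of Theorem~\ref{thm:rigidity}. By Lemma~\ref{lem:longitudinal}, $F_{,v}=-(8\pi/\beta_-)\tau(v)[1+o(1)]$. Since $\tau>0$ eventually and $\beta_->0$, we have $F_{,v}<0$ for all sufficiently large $v$. On the branch $F(v)=f_R(R(v))$, and $f$ is $C^2$ on the regular interval containing the curvature branch. Wherever $R$ is differentiable in $v$ we therefore have $F_{,v}=f_{RR}(R(v))R_{,v}$. If eventually $f_{RR}(R(v))>0$, then $R_{,v}=F_{,v}/f_{RR}<0$ eventually. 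Note also that $F_{,v}\neq0$ forces $f_{RR}\neq0$ and $R_{,v}\neq0$, so no division issue arises.

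Next, Theorem~\ref{thm:rigidity} gives $|R|\to\infty$. Because $R$ is eventually strictly decreasing, the limit cannot be $+\infty$, so $R\to-\infty$. Now $R=C_R\ee^{\beta_-v}\tau[1+o(1)]$ with $\ee^{\beta_-v}\tau>0$, so $C_R<0$ eventually fixes the sign. From Eq.~\eqref{eq:CR-explicit}, the factor $-96\pi A_-/(\beta_- B r_- F_-)$ is strictly negative, since $A_-,\beta_-,B,r_-,F_->0$. Hence $C_R<0$ is equivalent to $1+\Lambda_\infty>0$, that is, $\Lambda_\infty>-1$.

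For the converse, suppose $\Lambda_\infty<-1$. Then $C_R>0$, and the asymptotic formula together with $\ee^{\beta_-v}\tau\to\infty$ gives $R\to+\infty$. At any large $v$ with $R_{,v}\neq0$, the chain rule gives $f_{RR}(R(v))=F_{,v}/R_{,v}$. Since $R$ is eventually monotone and tends to $+\infty$, it is eventually nondecreasing, so $R_{,v}>0$ wherever it is nonzero, while $F_{,v}<0$. Therefore $f_{RR}<0$ at all such points.

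The only delicate step is justifying the chain rule, which needs $R(v)$ to be differentiable along the generator. The statement itself conditions on $R_{,v}$, so I will assume differentiability, as is implicit in the regular-branch hypotheses. The monotonicity sign in the converse should be taken from the assumed eventual monotonicity rather than from differentiability alone.
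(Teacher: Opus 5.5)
Your proposal is correct and follows essentially the same route as the paper. You use $F_{,v}<0$ from Lemma~\ref{lem:longitudinal} and the chain rule $F_{,v}=f_{RR}R_{,v}$ to fix the sign of $R_{,v}$, then $|R|\to\infty$ to get $R\to-\infty$, and the sign of $C_R$ in Eq.~\eqref{eq:CR-explicit} to get $\Lambda_\infty>-1$, with the converse argued identically via eventual monotonicity; your explicit remarks on the differentiability of $R(v)$ and on $F_{,v}\neq0$ forcing $R_{,v}\neq0$ are small clarifications that the paper leaves implicit.
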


\begin{proof}
Lemma~\ref{lem:longitudinal} gives $F_{,v}<0$ eventually because
$\tau>0$ and $\beta_->0$. Since $F=f_R(R)$,
\begin{equation}
 F_{,v}=f_{RR}R_{,v}.
 \label{eq:Fv-chain}
\end{equation}
Thus $f_{RR}>0$ implies $R_{,v}<0$. Together with
$|R|\to\infty$ this forces $R\to-\infty$. Since
$\ee^{\beta_-v}\tau(v)>0$, Eq.~\eqref{eq:R-asymptotic} then requires
$C_R<0$, and Eq.~\eqref{eq:CR-explicit} gives
$1+\Lambda_\infty>0$. If instead $\Lambda_\infty<-1$, then
$C_R>0$ and hence $R\to+\infty$; eventual monotonicity gives
$R_{,v}>0$ wherever nonzero, while $F_{,v}<0$, so
$f_{RR}=F_{,v}/R_{,v}<0$.
\end{proof}

\begin{corollary}[Asymptotic vanishing of $f_{RR}$ for regular high-curvature model classes]
\label{cor:decoupling}
Under the hypotheses of Theorem~\ref{thm:rigidity}, suppose in addition that
$f_{RR}(R)$ has a finite well-defined limit as $|R|\to\infty$ along the relevant
branch.  Then
\begin{equation}
 f_{RR}(R(v))\to0.
 \label{eq:fRRzero}
\end{equation}
\end{corollary}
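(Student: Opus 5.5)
Under the hypotheses of Theorem~\ref{thm:rigidity} we already know that $|R(v)|\to\infty$ monotonically and that $F(R(v))=f_R(R(v))\to F_-$. Denote by $L\equiv\lim_{|R|\to\infty}f_{RR}(R)$ the finite limit assumed along the branch. The plan is to show that $L\neq0$ contradicts the finiteness of $\lim F$. The argument is a mean-value (or Ces\`aro) comparison between $f_R$ and $f_{RR}$ along the unbounded curvature range, and uses only the $C^2$ regularity already assumed in Theorem~\ref{thm:rigidity}.

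First, restrict to the eventual monotone tail of the branch. On this tail $R(v)$ sweeps an unbounded interval $[R_0,\infty)$ or $(-\infty,R_0]$, and on that interval the limit of $f_{RR}$ along the branch coincides with the limit of $f_{RR}(R)$ as $R\to\pm\infty$. For $R$ and $R_0$ in the swept interval, the fundamental theorem of calculus (valid since $f$ is $C^2$) gives
\begin{equation}
 f_R(R)-f_R(R_0)=\int_{R_0}^{R} f_{RR}(\bar R)\,\dd\bar R .
\end{equation}
Suppose $L\neq0$. Then $|f_{RR}|\ge|L|/2$ with a fixed sign for all sufficiently large $|R|$. The integral therefore grows at least like $|L|\,|R-R_0|/2$, so $|f_R(R)|\to\infty$ as $|R|\to\infty$. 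Evaluating this along $R=R(v)$ contradicts $F(v)\to F_-<\infty$. Hence $L=0$, and composing with $R(v)$, which tends to the same infinity, gives Eq.~\eqref{eq:fRRzero}. An equivalent route is l'H\^opital's rule: $f_R(R)/R\to L$, while $f_R(R)/R\to F_-\cdot 0=0$.

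The main point requiring care is that the hypothesis is phrased as a limit "along the relevant branch." I would note explicitly that monotonicity ensures the branch covers a full half-line of curvature values, so the branch limit is a genuine limit of a function of $R$ and the integral argument applies. Without monotonicity, a branch limit would not control $f_{RR}$ on the gaps between sampled values.

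A second, cross-checking route uses the chain rule $F_{,v}=f_{RR}R_{,v}$, Eq.~\eqref{eq:Fv-chain}. By Lemma~\ref{lem:longitudinal}, $F_{,v}\sim-(8\pi/\beta_-)\tau$ is integrable, whereas $|R(v)|$ diverges by Eq.~\eqref{eq:R-asymptotic}. If $f_{RR}\to L\neq0$, integrating $F_{,v}=f_{RR}R_{,v}$ over $v$ would yield $\Delta F\approx L\,\Delta R\to\pm\infty$, which contradicts $F_{,v}\in L^1$. I expect no genuine obstacle here; the corollary is essentially the statement that a function with a finite limit at infinity cannot have a derivative with a nonzero limit.
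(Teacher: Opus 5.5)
Your proposal is correct and follows the paper's own argument: a nonzero finite limit of $f_{RR}$ would force $F=f_R$ to grow linearly with $|R|$, contradicting $F\to F_-$. Your fundamental-theorem-of-calculus estimate, and the equivalent l'H\^opital and chain-rule variants, just make the paper's one-line step explicit. One small correction to your side remark: the coverage of a curvature half-line already follows from continuity of $R(v)$ together with $|R|\to\infty$, so it is not monotonicity that fills the ``gaps''.
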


\begin{proof}
Since $F=f_R$ approaches the finite value $F_-$ while $|R|\to\infty$, a
nonzero finite limit of $f_{RR}$ would force $F$ to grow linearly with $R$,
contradicting $F\to F_-$.  Hence the only possible finite asymptotic limit is
zero.
\end{proof}

\begin{corollary}[Divergence of the standard scalaron mass parameter]
\label{cor:mass-divergence}
Under the hypotheses of Theorem~\ref{thm:rigidity}, if $F>0$ and
$f_{RR}>0$ eventually, then the standard Jordan-frame scalaron mass parameter
\begin{equation}
 m_{\rm sc}^2=\frac13\left(\frac{F}{f_{RR}}-R\right)
 \label{eq:mass-divergence-def}
\end{equation}
satisfies
\begin{equation}
 m_{\rm sc}^2\to+\infty.
 \label{eq:mass-divergence}
\end{equation}
\end{corollary}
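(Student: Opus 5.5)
The plan is to split the mass parameter into its two pieces, $F/f_{RR}$ and $-R$, and to show that each is eventually nonnegative while the second already diverges. Corollary~\ref{cor:viable-sign} applies because $f_{RR}>0$ eventually. It gives $R(v)\to-\infty$, so $-R(v)\to+\infty$.

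For the first piece, $F>0$ and $f_{RR}>0$ eventually, so $F/f_{RR}>0$ for all sufficiently large $v$. Combining the two pieces gives
\begin{equation*}
 m_{\rm sc}^2=\frac13\left(\frac{F}{f_{RR}}-R\right)\geq -\frac{R}{3}\to+\infty .
\end{equation*}
This already proves Eq.~\eqref{eq:mass-divergence}. No rate information on $f_{RR}$ is needed, which matches the abstract's remark that only the sharper asymptotic $m_{\rm sc}^2\sim F/(3f_{RR})$ requires an extra differentiable rate condition.

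There is no real obstacle. The only points to check are that $F/f_{RR}$ is well defined and positive eventually, which follows from strict positivity of $f_{RR}$, and that the hypotheses of Corollary~\ref{cor:viable-sign} are satisfied. Those hypotheses are exactly the standing assumptions of Theorem~\ref{thm:rigidity} together with eventual $f_{RR}>0$.

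As a remark, I would note that $F/f_{RR}\to+\infty$ does not follow from this argument alone. One could argue for it heuristically from $F_{,v}=f_{RR}R_{,v}$ together with Lemma~\ref{lem:longitudinal} and Theorem~\ref{thm:rigidity}, but making it precise needs control of $R_{,v}$. That is the extra rate condition mentioned in the abstract, and it is not needed here.
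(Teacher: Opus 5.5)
Your proposal is correct and is essentially the paper's own proof: Corollary~\ref{cor:viable-sign} gives $R\to-\infty$, and eventual positivity of $F/f_{RR}$ then yields $m_{\rm sc}^2>|R|/3\to+\infty$ without any rate information. Your closing remark is also consistent with the paper, which obtains the sharper $F/f_{RR}$-dominated asymptotics only under the additional rate condition of Proposition~\ref{prop:fRRrate}.
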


\begin{proof}
Corollary~\ref{cor:viable-sign} gives $R\to-\infty$. Hence
\begin{equation}
 m_{\rm sc}^2
 =\frac13\left(\frac{F}{f_{RR}}+|R|\right)
 >\frac{|R|}{3}\to+\infty.
\end{equation}
\end{proof}

\begin{proposition}[Rate-controlled heavy-scalaron limit]
\label{prop:fRRrate}
Under the hypotheses of Theorem~\ref{thm:rigidity}, suppose the asymptotic
closure is differentiable in the sense that
\begin{equation}
 \frac{R_{,v}}{R}\to\beta_-.
 \label{eq:RvR}
\end{equation}
Then
\begin{align}
 f_{RR}(R(v))&=O\!\left(\ee^{-\beta_-v}\right),
 \label{eq:fRR-rate}\\
 R(v)f_{RR}(R(v))&=O\!\left(\tau(v)\right)\to0.
 \label{eq:RfRR-rate}
\end{align}
If, in addition, $f_{RR}(R(v))>0$ eventually, then
\begin{equation}
 m_{\rm sc}^2\sim\frac{F}{3f_{RR}}\longrightarrow+\infty.
 \label{eq:mass-decoupling}
\end{equation}
\end{proposition}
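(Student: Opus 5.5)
The plan is to combine the chain rule \eqref{eq:Fv-chain} with the two asymptotic rates already available: the longitudinal rate of Lemma~\ref{lem:longitudinal} and the curvature rate of Theorem~\ref{thm:rigidity}. The rate hypothesis \eqref{eq:RvR} is what lets us turn the latter into a rate for $R_{,v}$. Since $|R|\to\infty$ and $R_{,v}/R\to\beta_->0$, the derivative $R_{,v}$ is eventually nonzero, so we may write
\begin{equation*}
 f_{RR}(R(v))=\frac{F_{,v}}{R_{,v}}=\frac{F_{,v}}{R}\,\frac{R}{R_{,v}}.
\end{equation*}

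\textbf{Step 1: rate of $f_{RR}$.} We substitute Eq.~\eqref{eq:Fv-universal}, $F_{,v}=-(8\pi/\beta_-)\tau[1+o(1)]$, together with Eq.~\eqref{eq:R-asymptotic}, $R=C_R\ee^{\beta_-v}\tau[1+o(1)]$ with $C_R\neq0$, and $R/R_{,v}\to1/\beta_-$. The factors of $\tau$ cancel, which gives
\begin{equation*}
 f_{RR}=-\frac{8\pi}{\beta_-^2C_R}\,\ee^{-\beta_-v}[1+o(1)].
\end{equation*}
This is in fact sharper than Eq.~\eqref{eq:fRR-rate}.

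\textbf{Step 2: the product $Rf_{RR}$.} Multiplying by $R$ gives
\begin{equation*}
 Rf_{RR}=\frac{R F_{,v}}{R_{,v}}=-\frac{8\pi}{\beta_-^2}\tau[1+o(1)],
\end{equation*}
which tends to zero by Eq.~\eqref{eq:tail-L1}. This proves Eq.~\eqref{eq:RfRR-rate}. As a consistency check, the sign $-8\pi/(\beta_-^2C_R)$ is positive exactly when $C_R<0$. This agrees with Corollary~\ref{cor:viable-sign}.

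\textbf{Step 3: the heavy-scalaron limit.} Now assume $f_{RR}>0$ eventually. From \eqref{eq:mass-divergence-def} we have
\begin{equation*}
 3m_{\rm sc}^2=\frac{F}{f_{RR}}\bigl(1-Rf_{RR}/F\bigr).
\end{equation*}
Since $F\to F_->0$ and $Rf_{RR}\to0$ by Step 2, the bracket tends to $1$, so $m_{\rm sc}^2\sim F/(3f_{RR})$. Step 1 shows $F/f_{RR}$ grows like $\ee^{\beta_-v}$, giving $m_{\rm sc}^2\to+\infty$; this is also consistent with Corollary~\ref{cor:mass-divergence}.

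The only delicate point is the use of $R_{,v}/R\to\beta_-$. Differentiating the asymptotic relation \eqref{eq:R-asymptotic} is not legitimate in general, which is precisely why the proposition adds \eqref{eq:RvR} as a hypothesis. We must also make sure $R_{,v}$ has no zeros, so that the quotient $F_{,v}/R_{,v}$ is defined; this follows directly from \eqref{eq:RvR} together with $R\neq0$ eventually. Everything else is ratio bookkeeping of the $[1+o(1)]$ factors.
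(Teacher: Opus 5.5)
Your proof is correct and follows essentially the paper's route. Like the paper, you use the chain rule $f_{RR}=F_{,v}/R_{,v}$ with the longitudinal rate of Lemma~\ref{lem:longitudinal}, the curvature asymptotics of Theorem~\ref{thm:rigidity} and the hypothesis $R_{,v}/R\to\beta_-$, and then the factorization $3m_{\rm sc}^2=(F/f_{RR})(1-Rf_{RR}/F)$. Your explicit leading coefficients, including $Rf_{RR}\sim-8\pi\tau/\beta_-^2$, are a harmless sharpening of the paper's $O(\cdot)$ bounds.
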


\begin{proof}
Where $R_{,v}\neq0$, the chain rule gives
\begin{equation}
 f_{RR}=\frac{F_{,v}}{R_{,v}}.
 \label{eq:chainrule}
\end{equation}
Combining Eq.~\eqref{eq:Fv-universal},
Eq.~\eqref{eq:R-asymptotic}, and Eq.~\eqref{eq:RvR} yields
Eq.~\eqref{eq:fRR-rate}. Multiplying this estimate by
Eq.~\eqref{eq:R-asymptotic} gives Eq.~\eqref{eq:RfRR-rate}, because
$\tau(v)\to0$. Thus $Rf_{RR}/F\to0$. If $f_{RR}>0$ eventually, then
Eq.~\eqref{eq:fRR-rate} gives $f_{RR}\to0^+$ and
\begin{equation}
 m_{\rm sc}^2
 =\frac{F}{3f_{RR}}
 \left(1-\frac{Rf_{RR}}{F}\right)
 \sim\frac{F}{3f_{RR}}\to+\infty,
\end{equation}
which proves Eq.~\eqref{eq:mass-decoupling}.
\end{proof}

The asymptotic hierarchy underlying the two principal branches is summarized
schematically in Fig.~\ref{fig:asymptotic-hierarchy}.  Panel (a) illustrates
the standard mass-inflation regime, in which a slowly decaying longitudinal
source remains nonintegrable after multiplication by the Cauchy-horizon
blueshift factor.  Panel (b) shows the distinct leading-cancellation branch:
although $\mathcal P_{vv}=o(\tau)$ removes the leading Price-tail source,
$F_{,v}$ remains slaved to $\tau$, while the non-cancelling trace sector drives
$\Box F$ and $|R|$ with the blueshifted scaling derived above.  The figure also
emphasizes that leading cancellation is weaker than the absolute weighted
integrability condition~\eqref{eq:screening-absolute} required for robust
bounded-mass screening.

\begin{figure*}[t]
    \centering
    \includegraphics[width=0.9\textwidth]{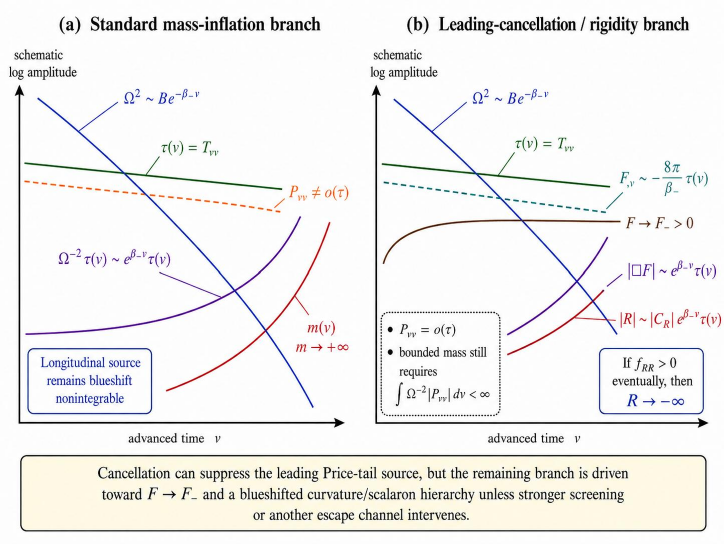}
    \caption{Schematic asymptotic scaling of the two principal local branches
    near a regular nondegenerate Cauchy horizon.  Panel (a) represents the
    standard mass-inflation regime: a Price-type tail decays slowly, whereas
    the blueshift factor $\Omega^{-2}$ grows exponentially, so the weighted
    longitudinal source remains nonintegrable and the Hawking mass diverges.
    Panel (b) represents the leading-cancellation branch: the condition
    $\mathcal P_{vv}=o(\tau)$ fixes
    $F_{,v}\sim-(8\pi/\beta_-)\tau$ and hence $F\to F_->0$, while, in the
    absence of independent transverse or trace cancellation, $\Box F$ and
    $|R|$ inherit the blueshifted scaling $\ee^{\beta_-v}\tau(v)$.  On an
    eventually viable branch with $f_{RR}>0$, the curvature is sign-selected
    to $R\to-\infty$.  The curves indicate asymptotic orders only and are not
    numerical solutions.}
    \label{fig:asymptotic-hierarchy}
\end{figure*}

Theorem~\ref{thm:rigidity} is the central result of this paper. Its content is
deliberately narrower, but stronger, than the statement that a small
$f_{RR}$ is ``pathological.'' If a regular nondegenerate branch cancels the
leading longitudinal Price-tail source and no independent transverse or trace
cancellation is arranged, then the high-curvature theory is forced toward an
asymptotically linear form, with the sign of the curvature fixed by the
transverse coefficient through Eq.~\eqref{eq:CR-explicit}. The condition
$\mathcal P_{vv}=o(\tau)$ should not be confused with the stronger
blueshift-integrable screening condition~\eqref{eq:screening-absolute}; the
theorem characterizes the asymptotic rigidity of the leading-cancellation
sector irrespective of whether the residual is already small enough to keep
the Hawking mass bounded.

For standard smooth model classes in which $f_{RR}$ has a definite finite
high-curvature limit, Corollary~\ref{cor:decoupling} gives $f_{RR}\to0$.
Corollary~\ref{cor:viable-sign} adds a separate sign statement: on an
eventually viable $f_{RR}>0$ branch, the cancellation trajectory must run to
$R\to-\infty$ and must satisfy $\Lambda_\infty>-1$. The standard algebraic
mass parameter then diverges already by Corollary~\ref{cor:mass-divergence}.
The additional differentiable asymptotics of Proposition~\ref{prop:fRRrate}
are needed for the stronger dominance relation $Rf_{RR}/F\to0$ and the
conventional $m_{\rm sc}^2\sim F/(3f_{RR})$ heavy-scalaron decoupling
asymptotics. None of these statements is, by itself, a contradiction
~\cite{SotiriouFaraoni2010,DeFeliceTsujikawa2010}.

\begin{corollary}[Obstruction for nondecoupling high-curvature branches]
\label{cor:nondecoupling}
Under the hypotheses of Theorem~\ref{thm:rigidity}, an $f(R)$ model for which
$|f_{RR}|$ is bounded away from zero along the relevant high-curvature branch
cannot support the leading-cancellation asymptotics
\eqref{eq:screened} without activating an additional channel outside the
hypotheses of the theorem.
\end{corollary}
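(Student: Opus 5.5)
The plan is to argue by contradiction, combining the consequences of Theorem~\ref{thm:rigidity} with the mean-value (chain-rule) relation between $F_{,v}$ and $R_{,v}$. Suppose the leading-cancellation asymptotics \eqref{eq:screened} hold under all the hypotheses of Theorem~\ref{thm:rigidity}, and suppose $|f_{RR}|\ge c>0$ along the relevant high-curvature branch. Theorem~\ref{thm:rigidity} then gives $|R(v)|\to\infty$, with $R$ eventually monotone. Lemma~\ref{lem:longitudinal} gives $F(v)\to F_-$, a finite limit.

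Since $f$ is $C^2$ on the regular interval containing the branch, $f_{RR}$ is continuous there. Because $|f_{RR}|\ge c$ on the (connected, eventually monotone) curvature range, $f_{RR}$ cannot change sign on that range; without loss of generality take $f_{RR}\ge c$. For $v_2>v_1$ large, the mean value theorem applied to $f_R$ gives
\[
|F(v_2)-F(v_1)|=|f_{RR}(\xi)|\,|R(v_2)-R(v_1)|\ge c\,|R(v_2)-R(v_1)|.
\]
Fix $v_1$ and let $v_2\to\infty$. The right-hand side diverges because $|R|\to\infty$, whereas the left-hand side tends to the finite value $|F_--F(v_1)|$. This is a contradiction.

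Equivalently, one can take the differential route, which is the same argument as Corollary~\ref{cor:decoupling} with the limit hypothesis replaced by a lower bound. Eq.~\eqref{eq:Fv-chain} gives $|R_{,v}|\le |F_{,v}|/c$. Lemma~\ref{lem:longitudinal} and Eq.~\eqref{eq:tail-L1} then make $R_{,v}$ integrable, so $R$ would stay bounded. This contradicts the unboundedness established in the proof of Theorem~\ref{thm:rigidity}, which is forced by $|\Box F|\sim \ee^{\beta_-v}\tau\to\infty$ through Eq.~\eqref{eq:box-asymptotic} and the trace equation.

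The main point needing care is the meaning of ``without activating an additional channel.'' The contradiction only shows that some hypothesis of Theorem~\ref{thm:rigidity} must fail. The proof should therefore list which hypotheses can fail, and these are exactly the escape channels: transverse cancellation $\Lambda_\infty=-1$, leading trace cancellation violating \eqref{eq:trace-subleading}, loss of the regular nondegenerate horizon data \eqref{eq:CH-Om}--\eqref{eq:CH-F} (degeneracy, $F_-\le0$, or $F$ leaving the regular interval), or a non-Price tail. I would state the conclusion as the contrapositive: any leading-cancelling branch of such a model must violate at least one of these conditions.
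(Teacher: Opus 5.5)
Your argument is correct and is essentially the paper's own reasoning. The paper gives no separate proof of Corollary~\ref{cor:nondecoupling}; it relies on the observation already made for Corollary~\ref{cor:decoupling}, namely that $|f_{RR}|$ bounded away from zero, combined with $|R|\to\infty$ from Theorem~\ref{thm:rigidity}, would force $F=f_R$ to change by an unbounded amount, contradicting $F\to F_-$. Your mean-value (or integrated chain-rule) argument is a rigorous version of that observation, and your reading of ``additional channel'' as the failure of some hypothesis of Theorem~\ref{thm:rigidity} matches the paper's escape-channel classification.
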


{There is a useful physical interpretation of the rigidity
	result.  The leading-cancellation branch is driven toward an
	asymptotically linear high-curvature sector with $F\to F_->0$.  For the
	regular high-curvature model classes covered by
	Corollary~\ref{cor:decoupling}, this additionally implies
	$f_{RR}\to0$, while on an eventually viable branch with $F>0$ and
	$f_{RR}>0$ the standard scalaron mass parameter diverges.  Under the
	additional rate condition of Proposition~\ref{prop:fRRrate}, this becomes
	the conventional heavy-scalaron limit
	$m_{\rm sc}^2\sim F/(3f_{RR})\to+\infty$.  In this restricted but
	physically relevant sense, leading scalaron cancellation is
	self-restricting in theory space: rather than leaving an arbitrary
	high-curvature screening sector, it drives the sampled trajectory toward
	an increasingly Einstein-like asymptotic form.  This interpretation is a
	consequence of the stated regularity and rate assumptions and should not
	be read as an independent proof of dynamical decoupling.}

\section{Einstein-frame interpretation: why screening is not negative scalaron energy}
\label{sec:einstein}

The scalaron-rigidity result becomes physically clearer after resolving the
extra degree of freedom explicitly.  On a branch with $F>0$, define
\begin{equation}
 \widetilde g_{\mu\nu}=F g_{\mu\nu},
 \qquad
 \phi=\sqrt{\frac{3}{16\pi}}\ln F.
 \label{eq:conformal}
\end{equation}
Where the scalar-tensor representation is regular, the action is equivalent
to Einstein gravity coupled to a canonical scalar with potential
\begin{equation}
 U(\phi)=\frac{RF-f}{16\pi F^2},
 \label{eq:potential}
\end{equation}
plus matter nonminimally coupled through $F^{-1}\widetilde g_{\mu\nu}$
~\cite{SotiriouFaraoni2010,DeFeliceTsujikawa2010,FaraoniGunzigNardone1999}.
The Einstein-frame equations are
\begin{equation}
 \widetilde G_{\mu\nu}
 =8\pi\left(
 \widetilde T^{(m)}_{\mu\nu}
 +T^{(\phi)}_{\mu\nu}
 \right),
 \label{eq:EF-Einstein}
\end{equation}
with
\begin{equation}
 T^{(\phi)}_{\mu\nu}
 =\widetilde\nabla_\mu\phi\widetilde\nabla_\nu\phi
 -\frac12\widetilde g_{\mu\nu}(\widetilde\nabla\phi)^2
 -U\widetilde g_{\mu\nu}.
 \label{eq:phi-stress}
\end{equation}
For covariant matter components,
\begin{equation}
 \widetilde T^{(m)}_{\mu\nu}=F^{-1}T^{(m)}_{\mu\nu}.
 \label{eq:matter-transform}
\end{equation}
Hence, for any null vector $k^\mu$,
\begin{equation}
 \widetilde T^{\rm tot}_{kk}
 =F^{-1}T_{kk}+(k\phi)^2.
 \label{eq:null-positive}
\end{equation}
The scalar potential does not appear because $k^2=0$.  Therefore, if the physical matter satisfies $T_{kk}\ge0$, the scalaron contribution to the total Einstein-frame null flux is nonnegative. In particular, the scalaron cannot supply a negative null-kinetic contribution that cancels a positive matter null flux.

Equation~\eqref{eq:null-positive} is not used as a choice of ``physical
frame.''  It is a field-redefinition diagnostic: it shows that a cancellation
in the Jordan combination $\mathcal P_{vv}$ is not a frame-invariant
cancellation of physical null energy.  The second derivative
$\nabla_v\nabla_vF$ that can oppose $T_{vv}$ in the Jordan equation is
reorganized into a canonical positive kinetic term plus conformal geometry
in the Einstein representation.

For the scalar-curvature mode linearized about a constant-curvature, or
sufficiently slowly varying, background, the standard Jordan-frame mass
parameter is
\begin{equation}
 m_{\rm sc}^2
 =\frac13\left(\frac{F}{f_{RR}}-R\right),
 \label{eq:scalaron-mass}
\end{equation}
with the precise local interpretation depending on the background and
stability conditions~\cite{DeFeliceTsujikawa2010}. The sign selection found
above makes the hierarchy transparent. On an eventually viable branch with
$F>0$ and $f_{RR}>0$, Corollary~\ref{cor:viable-sign} gives $R\to-\infty$;
therefore the algebraic parameter~\eqref{eq:scalaron-mass} diverges positively
even before a detailed rate comparison is imposed. Proposition~\ref{prop:fRRrate}
then supplies the stronger result $Rf_{RR}/F\to0$, so that
$m_{\rm sc}^2\sim F/(3f_{RR})\to+\infty$. It is this rate-controlled regime
that most directly supports the standard large-mass, short-range scalaron
decoupling interpretation. Such behavior should not be confused with a proof
of strong coupling or inconsistency.

\section{Classification of escape channels}
\label{sec:escape}

The preceding results organize the robust ways in which the standard
nondegenerate mass-inflation channel can be avoided within spherical symmetry
and the local assumptions used above. Table~\ref{tab:escape} lists the main
geometric, dynamical, and regularity-changing mechanisms. A conditionally
convergent sign-changing weighted source is mathematically possible as well,
but it is not a sign-robust screening mechanism and is not listed separately.

\begin{table*}[t]
\caption{Local escape channels from the standard nondegenerate mass-inflation
criterion.  The table separates geometric changes from dynamical screening
and from failures of the regular-branch assumptions.}
\label{tab:escape}
\begin{ruledtabular}
\begin{tabular}{p{0.20\textwidth}p{0.31\textwidth}p{0.39\textwidth}}
Channel & Mathematical condition & Physical interpretation \\
\hline
Horizon degeneracy & $\kappa_-=0$ & Lies outside the positive-exponential class~\eqref{eq:CH-Om}; amplification can become power-law or finite. \\
Fast exterior decay & $\int \Omega^{-2}\mathcal P_{vv}\,\dd v<\infty$ without scalaron cancellation & Relevant for sufficiently rapid, e.g. exponential, tails; central in de Sitter SCC analyses. \\
Blueshift-integrable scalaron screening & $\int\Omega^{-2}|\mathcal P_{vv}|\,\dd v<\infty$ & Removes the longitudinal divergence robustly; when realized through a smooth leading cancellation of a Price tail, the associated cancellation sector is constrained by Theorem~\ref{thm:rigidity}. \\
Transverse cancellation & $\Lambda\to-1$ & Cancels the area-factor term inside $\Box F$; independent tuning not implied by longitudinal cancellation. \\
Trace cancellation & $8\pi T$ cancels $3\Box F$ at leading order & Prevents the trace equation from forcing the rigidity curvature scaling. \\
Mixed-channel compensation & $\mathcal P_{uv}r_{,v}$ competes with $-\mathcal P_{vv}r_{,u}$ & Invalidates the usual longitudinal dominance in the exact Hawking-mass transport law. \\
Nonregular scalar-tensor branch & $F\to0$, $F\to\infty$, or loss of regularity/invertibility & The regular Einstein-frame/scalaron description ceases to be uniformly applicable. \\
Null-positivity violation & $T_{vv}<0$ over the relevant asymptotic regime & Requires exotic/quantum matter and lies outside the classical positive-flux setting. \\
\end{tabular}
\end{ruledtabular}
\end{table*}

\subsection{Degenerate inner horizons}

The cleanest geometric escape is $\kappa_-=0$. Such a degenerate inner
horizon lies outside the positive-exponential asymptotic class
\eqref{eq:CH-Om}; one must replace that ansatz by the appropriate
subexponential near-horizon behavior. Theorem~\ref{thm:massinflation} therefore
does not apply directly. This mechanism has been implemented explicitly in
spherical and rotating regular black-hole models
~\cite{CarballoRubio2022,Franzin2022}. More recent analyses show that
degenerate inner horizons can soften the standard exponential amplification,
with shell and Ori-type constructions exhibiting power-law behavior and, in
some models, finite late-time Misner--Sharp mass
~\cite{Feng2026,LiuSoranidis2026}. Modified-gravity constructions with
inner-extremal horizons provide further examples
~\cite{EichhornFernandes2026,DiFilippoKolarKubiznak2025}.
A complementary quasitopological-gravity null-shell analysis found that significant amplification, within the adopted double-shell model, is confined to parametrically small radial separations from the inner horizon~\cite{FrolovZelnikov2026}. Subsequent work has emphasized an important qualification: for the pure-gravity regular black-hole solutions considered in resummed quasitopological gravity, standard distributional null thin shells need not exist, so the corresponding thin-shell mass-inflation argument cannot be applied directly and the full inner-horizon stability problem remains open~\cite{DiFilippoKubiznakSrinivasan2026}.

This literature is important for the interpretation of the present result.

Our rigidity theorem does not say that modified gravity has no route to a
stable interior. It says that a \emph{scalaron attempt to cancel the leading
longitudinal source at fixed nondegenerate geometry} is highly constrained;
actual bounded-mass screening additionally requires the weighted
integrability condition~\eqref{eq:screening-absolute}. Changing the geometry to $\kappa_-=0$, thereby leaving the positive-exponential
asymptotic class used in the theorem, is a genuinely different mechanism.

\subsection{Fast tails and the de Sitter contrast}

The second clean escape is kinematic: the incoming flux can decay faster
than the blueshift amplifies it. If, in the eventually nonnegative
longitudinal sector,
\begin{equation}
 \int^\infty \ee^{\beta_-v}\mathcal P_{vv}(v)\,\dd v<\infty,
 \label{eq:fast-tail}
\end{equation}
then the longitudinal contribution to the mass need not diverge. More
robustly, absolute weighted integrability is precisely
Eq.~\eqref{eq:screening-absolute}. This possibility is not academic. For
Einstein--Maxwell--scalar dynamics with a positive cosmological constant and
an exponential Price law, rigorous work shows that mass inflation may or may
not occur depending on the competition between decay and Cauchy-horizon
blueshift~\cite{CostaEtAl2018}. The subexponential assumptions of
Sec.~\ref{sec:screening} are therefore intended for asymptotically flat Price
tails and should not be exported unchanged to de Sitter settings.

\subsection{Dynamical and matter-sector escape channels}

The remaining channels are dynamical.  A transverse cancellation
$\Lambda\to-1$ removes the leading $\Box F$ term and evades
Theorem~\ref{thm:rigidity} without changing the leading longitudinal cancellation
condition.  A leading cancellation between $3\Box F$ and $8\pi T$ in the
trace equation provides another possibility.  Neither follows from the
$vv$ equation; both constitute additional constraints on the two-dimensional
solution.

Likewise, Eq.~\eqref{eq:mass-transport-v} shows exactly where the usual
Poisson--Israel hierarchy can fail: if $\mathcal P_{uv}r_{,v}$ remains
comparable to $-\mathcal P_{vv}r_{,u}$, the mixed channel must be retained.
Finally, violation of the classical matter null condition or a singular
limit of $F$ can modify the argument at a more fundamental level.

The classification also connects naturally with the broader quasi-local
view of mass inflation.  Exponential energy buildup can occur near slowly
evolving inner trapping horizons even when there is no exact global Cauchy
horizon~\cite{CarballoRubio2024}.  In such situations the local transport
identity remains useful, but the asymptotic limit $v\to\infty$ should be
replaced by the finite-duration evolution of the inner trapping horizon.

The complete local classification is summarized in
Fig.~\ref{fig:logic-overview}.  Under uniform longitudinal dominance, an
eventually nonnegative and blueshift-nonintegrable source yields mass
inflation.  Leading Price-tail cancellation instead enters the scalaron-rigidity
branch unless an independent transverse or trace cancellation intervenes;
horizon degeneracy, sufficiently rapid exterior decay, and mixed-channel
competition are distinct escape mechanisms.  Figure~\ref{fig:logic-overview}
is therefore a synthesis of the results derived above, not an additional
dynamical assumption.

\begin{figure*}[!t]
    \centering
    \includegraphics[width=0.98\textwidth]{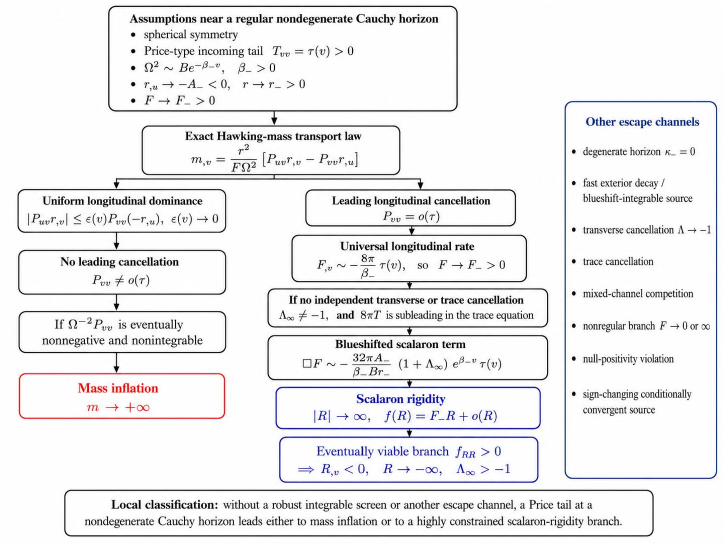}
    \caption{Logical summary of the local Cauchy-horizon classification
    developed in this work.  At a regular nondegenerate horizon, an eventually
    nonnegative and blueshift-nonintegrable longitudinal source leads to mass
    inflation when it uniformly dominates the mixed channel.  Cancelling the
    leading Price-tail contribution instead fixes the asymptotic scalaron rate
    and, in the absence of independent transverse or trace cancellations,
    drives the solution toward the scalaron-rigidity branch of
    Theorem~\ref{thm:rigidity}.  On an eventually viable branch with
    $f_{RR}>0$, the curvature is further selected toward $R\to-\infty$.  The
    right-hand column lists logically distinct escape channels that invalidate
    one or more hypotheses of this hierarchy.  The diagram is a schematic
    summary of the analytical results and does not represent a numerical
    evolution.}
    \label{fig:logic-overview}
\end{figure*}

\section{Representative \texorpdfstring{$f(R)$}{f(R)} models and numerical collapse}
\label{sec:models}

Figure~\ref{fig:logic-overview} provides a theory-independent summary of the
local alternatives.  We now ask how representative metric $f(R)$ models
populate this classification.  The rigidity theorem and its corollaries distinguish model classes by both
the magnitude and the sign structure of their high-curvature behavior. In
particular, eventual viability $f_{RR}>0$ selects $R\to-\infty$ on the
leading-cancellation branch, while the rate-controlled heavy-scalaron
interpretation uses Proposition~\ref{prop:fRRrate}.

\paragraph{Quadratic corrections.---}
For
\begin{equation}
 f(R)=R+\alpha R^2,
 \label{eq:R2}
\end{equation}
one has $f_{RR}=2\alpha$.  Corollary~\ref{cor:nondecoupling} therefore rules
out the regular leading-cancellation asymptotics of Theorem~\ref{thm:rigidity} for fixed
$\alpha\neq0$ unless another escape channel is activated.  This is notable
because quadratic models have been used extensively as ultraviolet
completions and were among the theories studied in numerical $f(R)$ collapse
~\cite{HwangLeeYeom2011,GuoJoshi2015}.

\paragraph{Viable dark-energy models.---}
Hu--Sawicki and Starobinsky-type late-time acceleration models are designed
to approach general relativity at large positive cosmological curvature, with
$F$ near a constant and a heavy scalaron
~\cite{HuSawicki2007,Starobinsky2007}. Their familiar GR-recovery behavior is
therefore qualitatively reminiscent of the endpoint $F\to F_-$ and
$f_{RR}\to0$. However, Corollary~\ref{cor:viable-sign} shows that endpoint
resemblance is not sufficient: an eventually stable leading-cancellation
branch with $f_{RR}>0$ is forced toward $R\to-\infty$. Compatibility with a
specific dark-energy model must therefore be assessed on its negative-curvature
branch; it does not follow automatically from the standard positive-curvature
cosmological GR-recovery limit. Frolov showed that viable dark-energy models can
place a curvature singularity at a finite scalaron field distance
~\cite{Frolov2008}, and black-hole collapse can drive the scalaron toward
high-curvature regions~\cite{GuoJoshi2015}.

\paragraph{Relation to existing double-null simulations.---}
The numerical work of Hwang, Lee and Yeom is especially relevant because it
uses a full double-null $f(R)$ system and finds mass inflation near the
Cauchy horizon while higher-curvature terms can keep the Ricci scalar bounded
for particular models~\cite{HwangLeeYeom2011}.  This does not contradict
Theorem~\ref{thm:rigidity}.  Their simulations do not impose the leading-order longitudinal cancellation
condition~\eqref{eq:screened}; mass inflation remains
active.  Our result instead characterizes the special branch that would be
required to screen that source.  Conversely, the fact that a particular
curvature scalar can remain bounded during mass inflation illustrates why
one should distinguish Hawking-mass blow-up, curvature blow-up, and SCC
regularity statements.

The later simulations of Guo and Joshi found that, in dark-energy $f(R)$
models, strong interior dynamics can push $F$ toward its GR value while
$R$ becomes singular; adding an $R^2$ term changes the location and character
of the singularity but does not generically remove strong-curvature collapse
~\cite{GuoJoshi2015}.  The scalaron-rigidity theorem provides an analytic organizing principle for
these observations: a branch that tries to cancel the longitudinal source
while remaining regular is driven toward an asymptotically linear scalaron
sector. On an eventually stable $f_{RR}>0$ branch, the cancellation trajectory
is additionally forced toward negative divergent curvature. Under the rate
condition of Proposition~\ref{prop:fRRrate}, the standard $F/f_{RR}$ term
dominates the scalaron mass parameter and yields the conventional heavy,
short-range limit. A branch that does not satisfy this hierarchy must instead
activate another escape channel.

\section{Mass inflation, strong cosmic censorship, and scope}
\label{sec:scc}

The present results concern a local instability mechanism and should not be
promoted directly into a theorem of strong cosmic censorship.  In the
spherically symmetric Einstein--Maxwell--scalar system, the rigorous picture
is already subtle: the metric can remain $C^0$ extendible across a Cauchy
horizon while stronger curvature or connection quantities diverge, and the
$C^2$ formulation of SCC requires generic lower bounds on the radiation
field~\cite{Dafermos2005,LukOh2017I,LukOh2017II}.  With a positive
cosmological constant, exponentially decaying tails can alter mass inflation
and the corresponding extendibility properties~\cite{CostaEtAl2018}.

Accordingly, our statements have the following scope.

First, Theorem~\ref{thm:massinflation} is an exact local criterion in
spherical symmetry.  It requires a nondegenerate Cauchy-horizon asymptotic
regime and longitudinal dominance of the Hawking-mass transport equation.
It does not prove that such a horizon forms from generic Cauchy data. The
assumption $r_{,u}\to-A_-<0$ is an independent counter-streaming condition and
is not implied by nondegeneracy alone.

Second, Theorem~\ref{thm:rigidity} is conditional on leading-order
longitudinal cancellation, Eq.~\eqref{eq:screened}, not on the stronger
bounded-mass condition~\eqref{eq:screening-absolute}. It classifies the
asymptotics of a branch that removes the leading $vv$ Price-tail source. It
does not claim that the residual source is already blueshift integrable, that
such a branch exists globally, that it is dynamically selected, or that it
is stable.

Third, the non-cancelling transverse and trace asymptotic assumptions are explicit.  If the
full two-dimensional scalaron dynamics enforces $\Lambda\to-1$, or if the
matter trace cancels the blueshifted $\Box F$ term, the rigidity theorem must
be replaced by a separate analysis.  Such a relation would itself be a
nontrivial dynamical constraint and is not hidden inside the present proof.

Fourth, the viable-branch sign corollary assumes eventual $f_{RR}>0$. It
therefore constrains only the stable sign sector of the cancellation branch;
a model may evade that sign selection by entering $f_{RR}<0$, but then it has
left the conventional viable branch.

Fifth, the Einstein-frame discussion is used only to resolve the sign of
the scalaron's kinetic null flux.  The analysis does not assume that the
Einstein frame is the uniquely ``physical'' frame.  Questions of observable
frame interpretation are logically separate from the field-redefinition
identity~\eqref{eq:null-positive}.

Finally, the analysis is classical.  Semiclassical stress tensors can violate
pointwise energy conditions, and classical versus quantum stability of Cauchy
horizons need not coincide~\cite{MarkovicPoisson1995}.  Quantum effects may
also destabilize horizons that are classically protected by degeneracy~\cite{McMaken2023}.  A
quantum extension would require a renormalized stress tensor and lies beyond
the present work.

\section{Discussion and conclusions}
\label{sec:conclusions}

We have developed a geometry-first classification of Cauchy-horizon mass
inflation in metric $f(R)$ gravity, based on the exact Hawking-mass transport
law~\eqref{eq:mass-transport-v}. At a regular nondegenerate Cauchy horizon,
the longitudinal source is exponentially blueshifted. Consequently, in the
eventually nonnegative and longitudinally dominated sector, a Price-type tail
drives mass inflation unless the effective source is suppressed strongly
enough to become blueshift integrable or another contribution to the exact
transport law competes at the same order. This distinction between genuine
bounded-mass screening and the weaker cancellation of only the leading
Price-tail contribution is central to the analysis. The corresponding
asymptotic regimes are illustrated in Fig.~\ref{fig:asymptotic-hierarchy},
while Fig.~\ref{fig:logic-overview} summarizes the local classification.

The leading-cancellation branch is itself highly constrained. The scalaron
cannot remain freely adjustable: its longitudinal evolution becomes slaved
to the decaying matter flux, while its transverse dynamics re-enters through
the trace equation. In the absence of an independent transverse or trace
cancellation, the curvature becomes unbounded and the gravitational
Lagrangian is driven toward an asymptotically linear branch. 
For regular model classes admitting a finite high-curvature limit of
$f_{RR}$, this further requires $f_{RR}\to0$.

The sign structure sharpens this result. On an eventually viable branch with
$f_{RR}>0$, the cancellation trajectory is forced toward
$R\to-\infty$ and must satisfy $\Lambda_\infty>-1$. The familiar
positive-curvature GR-recovery regime of viable cosmological $f(R)$ models
is therefore not automatically the branch sampled by the black-hole
interior. On the viable cancellation branch the standard scalaron mass
parameter becomes large, and under the additional rate control of
Proposition~\ref{prop:fRRrate} the conventional heavy, short-range scalaron
limit is recovered. Thus $f_{RR}\to0$ should not, by itself, be interpreted
as a pathology.

The Einstein-frame representation provides a complementary interpretation.
The scalaron's null kinetic contribution is nonnegative, so the Jordan-frame
cancellation of the matter source by derivatives of $F$ is not a cancellation
by negative scalaron kinetic energy. Rather, it is a redistribution between
scalar and geometric derivative terms, which helps explain why sustaining the
cancellation imposes strong asymptotic restrictions.

The analysis also separates scalaron screening from genuinely different
escape mechanisms. Horizon degeneracy removes the exponential blueshift at
its geometric origin; sufficiently rapid exterior decay can beat the
blueshift; mixed-channel competition can invalidate longitudinal dominance;
and transverse or trace cancellations can evade the rigidity argument.
Nonregular scalar-tensor limits, sign-changing conditionally convergent
sources, and violations of classical null positivity provide further
possibilities outside the regular sector considered here.

Several directions follow naturally. Numerically, existing double-null
$f(R)$ collapse codes can monitor the effective transport terms,
$\Lambda(v)$, $R$, and $f_{RR}$ to determine which asymptotic branch is
dynamically selected and whether the exceptional transverse cancellation can
arise from regular initial data. Geometrically, the same transport hierarchy
can be extended to slowly evolving inner trapping horizons, where large
interior amplification need not be tied to an exact global Cauchy horizon.

More broadly, it will be important to determine which elements of the rigidity
mechanism survive in genuine scalar-tensor, Horndeski, and higher-derivative
theories, whose inner-horizon dynamics must be derived from their full kinetic
structure. Semiclassical extensions are another natural direction, since
quantum stress tensors can violate the classical null-positivity assumptions
and horizon degeneracy need not imply quantum stability.

The main conclusion is therefore sharp. At a regular, nondegenerate Cauchy
horizon, metric $f(R)$ gravity does not provide a generic freely adjustable
scalaron screen for a Price-type influx. If the effective longitudinal source
remains blueshift nonintegrable, mass inflation follows; if its leading
contribution is cancelled, the scalaron is instead driven toward a restricted
rigidity branch unless an independent escape mechanism intervenes. Whether
such exceptional branches are dynamically formed and sufficiently regular to
alter the strong-cosmic-censorship picture is the central question left for
future investigation.

\appendix

\section{Derivation of the exact Hawking-mass transport law}
\label{app:mass}

For the metric~\eqref{eq:metric}, the relevant Einstein-tensor components are
\begin{align}
 G_{vv}
 &=-\frac{2}{r}\left(r_{,vv}-2\sigma_{,v}r_{,v}\right),
 \label{eq:Gvv-app}\\
 G_{uv}
 &=\frac{2r_{,uv}}{r}
 +\frac{2r_{,u}r_{,v}}{r^2}
 +\frac{\Omega^2}{2r^2}.
 \label{eq:Guv-app}
\end{align}
From $FG_{\mu\nu}=\mathcal P_{\mu\nu}$,
\begin{align}
 r_{,vv}-2\sigma_{,v}r_{,v}
 &=-\frac{r}{2F}\mathcal P_{vv},
 \label{eq:rvv-app}\\
 r_{,uv}
 &=\frac{r}{2F}\mathcal P_{uv}
 -\frac{r_{,u}r_{,v}}{r}
 -\frac{\Omega^2}{4r}.
 \label{eq:ruv-app}
\end{align}
The Hawking mass is
\begin{equation}
 m=\frac r2\left(1+\frac{4r_{,u}r_{,v}}{\Omega^2}\right).
 \label{eq:m-app}
\end{equation}
Differentiating with respect to $v$ gives
\begin{align}
 m_{,v}
 ={}&\frac{r_{,v}}2
 +\frac{2r_{,u}r_{,v}^2}{\Omega^2}
 \notag\\
 &+\frac{2r}{\Omega^2}
 \left[
 r_{,uv}r_{,v}
 +r_{,u}r_{,vv}
 -2\sigma_{,v}r_{,u}r_{,v}
 \right].
 \label{eq:mv-app1}
\end{align}
Substitution of Eqs.~\eqref{eq:rvv-app} and~\eqref{eq:ruv-app} cancels the
purely geometric terms and leaves
\begin{equation}
 m_{,v}
 =\frac{r^2}{F\Omega^2}
 \left(\mathcal P_{uv}r_{,v}-\mathcal P_{vv}r_{,u}\right),
 \end{equation}
which is Eq.~\eqref{eq:mass-transport-v}.  The $u$ equation follows by
interchanging $u$ and $v$.

\section{Asymptotic integral used in the screening lemma}
\label{app:asymptotic}

Let
\begin{equation}
 I(v)=\int_{v_0}^{v}\ee^{-2\sigma(\bar v)}\tau(\bar v)\,\dd\bar v.
 \label{eq:I-def}
\end{equation}

Under Eq.~\eqref{eq:CH-sigma}, namely
$\sigma_{,v}\to-\beta_-/2$, and $\tau'/\tau\to0$, define
\begin{equation}
 H(v)=\frac{\ee^{-2\sigma(v)}\tau(v)}{\beta_-}.
\end{equation}
Both $I$ and $H$ diverge because
$\ee^{-2\sigma}\tau\to\infty$. Moreover,
\begin{equation}
 \frac{I'}{H'}
 =\frac{\beta_-}
 {-2\sigma_{,v}+\tau'/\tau}
 \longrightarrow1.
\end{equation}
L'H\^opital's rule therefore gives $I/H\to1$, which proves
Eq.~\eqref{eq:asymptotic-integral} without invoking a specific power-law
index.

\section{Einstein-frame null positivity}
\label{app:EF}

For completeness, varying the matter action at fixed scalaron under
$g_{\mu\nu}=F^{-1}\widetilde g_{\mu\nu}$ gives
\begin{equation}
 \widetilde T^{(m)}_{\mu\nu}=F^{-1}T^{(m)}_{\mu\nu}.
\end{equation}
The canonical scalar stress tensor~\eqref{eq:phi-stress} obeys
\begin{equation}
 T^{(\phi)}_{\mu\nu}k^\mu k^\nu=(k^\mu\partial_\mu\phi)^2
\end{equation}
for every $\widetilde g$-null vector $k^\mu$.  Because conformal
transformations preserve null directions, the total null projection is
Eq.~\eqref{eq:null-positive}.  Thus the sign statement used in
Sec.~\ref{sec:einstein} is exact and independent of the scalar potential.

\begin{acknowledgments}
	FSNL acknowledges support from the Funda\c{c}\~{a}o para a Ci\^encia e a Tecnologia (FCT) Scientific Employment Stimulus contract with reference CEECINST/00032/2018, and funding through the research grants UIDB/04434/2020, UIDP/04434/2020 and PTDC/FIS-AST/0054/2021.
\end{acknowledgments}

\bibliographystyle{apsrev4-2}
\bibliography{Mass_Inflation_or_Scalaron_Rigidity_References}

\end{document}